\newtheorem{theorem}{Theorem}
\newtheorem{proposition}{Proposition}
\newtheorem{corollary}{Corollary}
\theoremstyle{definition}
\newtheorem{definition}{Definition}
\theoremstyle{remark}
\newtheorem{remark}{Remark}
\definecolor{webgreen}{rgb}{0,.5,0}
\definecolor{webbrown}{rgb}{.6,0,0}
\definecolor{grigio}{rgb}{.85,.85,.85} 
\definecolor{RoyalBlue}{rgb}{0.0, 0.14, 0.4}
\definecolor{skyblue1}{rgb}{0.45,0.62,0.81}
\definecolor{skyblue2}{rgb}{0.2,0.39,0.64}
\definecolor{skyblue3}{rgb}{0.13,0.29,0.53}
\definecolor{scarlet1}{rgb}{0.93,0.16,0.16}
\definecolor{scarlet2}{rgb}{0.8,0,0}
\definecolor{scarlet3}{rgb}{0.64,0,0}
\begin{document}
\title{
Analog and Symbolic Computation through the Koopman Framework\\
}

\author{Francesco Caravelli}
\affiliation{University of Pisa, Largo Bruno Pontecorvo 3, 56127 Pisa, Italy}
\affiliation{NEST, Scuola Normale Superiore, Piazza San Silvestro 12, 56127 Pisa, Italy}

\author{Jean-Charles Delvenne}
\affiliation{Institute of Information and Communication Technologies, Electronics and Applied Mathematics, Université catholique de Louvain, Louvain-la-Neuve, Belgium}


\begin{abstract}
We develop a Koopman operator framework for studying the {computational properties} of dynamical systems. 
Specifically, we show that the resolvent of the Koopman operator provides a natural abstraction of halting, yielding a ``Koopman halting problem that is recursively enumerable in general. For symbolic systems, such as those defined on Cantor space, this operator formulation captures the reachability between clopen sets, while for equicontinuous systems we prove that the Koopman halting problem is decidable. Our framework demonstrates that absorbing (halting) states {in finite automata} correspond to Koopman eigenfunctions with eigenvalue one, while cycles in the transition graph impose algebraic constraints on spectral properties. These results provide a unifying perspective on computation in symbolic and analog systems, showing how computational universality is reflected in operator spectra, invariant subspaces, and algebraic structures. Beyond symbolic dynamics, this operator-theoretic lens opens pathways to analyze {computational power of} a broader class of dynamical systems, including polynomial and analog models, and suggests that computational hardness may admit dynamical signatures in terms of Koopman spectral structure.
\end{abstract}


\maketitle


\section{Introduction}

The foundations of computation theory were established in the 1930s with the equivalence of several intuitive notions of an ``algorithm'': Church’s $\lambda$-calculus, Kleene’s partial recursive functions, and Turing machines \cite{church1936unsolvable,turing1936computable}. Among these, Turing machines are especially notable for their dynamical systems character: they can be viewed as physical processes, where algorithmic execution reduces to questions of reachability or halting in a discrete state evolution \cite{moore1990unpredictability}. Other formal models, such as finite automata or pushdown automata, also define discrete trajectories, though with weaker computational power than Turing machines \cite{hopcroft1979automata}. Already in the 1940s, formal attempts at defining analog computing systems appeared \cite{shannon1941,bournez2007survey}.  

This naturally raises a broader question: which dynamical systems can genuinely be said to perform computation?  

Extensive work has explored this question in a variety of settings: differential equations \cite{rubel1981universal}, discrete-time piecewise linear maps \cite{branicky1995universal}, cellular automata \cite{wolfram1984universality}, and symbolic dynamical systems \cite{delvenne2006computational}. Yet no consensus exists on how to define the notion of ``halting'' in such systems. Should it mean state-to-state reachability? Eventual convergence to an absorbing set? And how should one address the practical impossibility of infinite precision in uncountable state spaces \cite{blum1989theory}? Depending on the interpretation, one obtains quite different answers to what it means for a dynamical system to ``compute.''  

Rather than resolving these debates, we suggest that the Koopman operator provides a unifying, high-level lens \cite{mezic2005spectral,budisic2012applied}. By lifting nonlinear dynamics into an infinite-dimensional linear operator framework acting on observables, the Koopman formalism allows halting and decidability to be recast in terms of spectral properties, invariant subspaces, and reachability of observables.  

This viewpoint is particularly appealing in light of recent results linking computational hardness in SAT problems with chaotic dynamics in analog systems. In particular, Ercsey-Ravasz and Toroczkai showed that the SAT--UNSAT transition in an analog solver\footnote{The mapping between the 3SAT and the analog system was done via an arithmetization procedure (analog AND(a,b)=ab, analog OR(a,b)=a+b-ab and analog NOT(a)=1-a) and then converting the clauses into a dynamical system, while forcing the system to converge only to values of 0 and 1s.} for 3SAT coincides with the onset of transient chaos in specially designed analog dynamical systems \cite{ercsey2011optimization}. The complexity of analog computation has been studied from multiple perspectives over the years \cite{pourel1974,vergis1986,siegelmann1995,moore1996,graca2003}, and remains an active line of research today \cite{bournez2006,bournez2017}.  

These dynamical regimes exhibit hallmarks of chaos such as sensitivity to initial conditions, fractal basin boundaries, and exponential divergence of trajectories. In this light, computational complexity appears as a kind of dynamical phase transition, with complexity-theoretic properties leaving physical signatures in the dynamics \cite{krzakala2007gibbs}.  

Crucially, while all NP problems are decidable in the classical Turing sense \cite{papadimitriou1994complexity}, their analog realizations may display chaotic dynamics that mirror the intrinsic difficulty of solving them. This raises the intriguing possibility that chaos itself could encode hardness, and that the Koopman operator (via its spectral structure) might provide a systematic tool to classify and quantify such computational phenomena across symbolic and analog systems.  

Our aim is thus to provide a systematic operator-theoretic perspective on the interplay between dynamical systems and computation, bridging symbolic and analog paradigms. 

In this article, we advance some arguments towards this direction. Specifically we model computing dynamical systems in at least two ways (autonomous dynamical systems such as Turing machines ; dynamical systems with external inputs such as finite automata), and in each case we find natural formalizations of standard computability properties such as the halting properties into properties of the Koopman operator. 

The structure of the present manuscript is as follows. We begin by recalling the definition of the Koopman operator for discrete-time autonomous systems. 
We propose a generalized halting problem (as it applies to Turing machines, cellular automata or other autonomous dynamical systems) expressed through the resolvent of the Koopman operator, and prove that it is recursively enumerable in the sense of Pour-El and Richards \cite{pourel1981wave,pourel1989computability}, while for equicontinuous symbolic systems it becomes decidable \cite{delvenne2006computational}. 

We then show how classical computing models with inputs (e.g., finite automata) can also be naturally embedded in this framework. In particular, we show how standard computational features manifest in Koopman terms, e.g.  halting states appear as eigenfunctions with eigenvalue one.

\begin{figure*}
    \centering
    \includegraphics[width=0.9\linewidth]{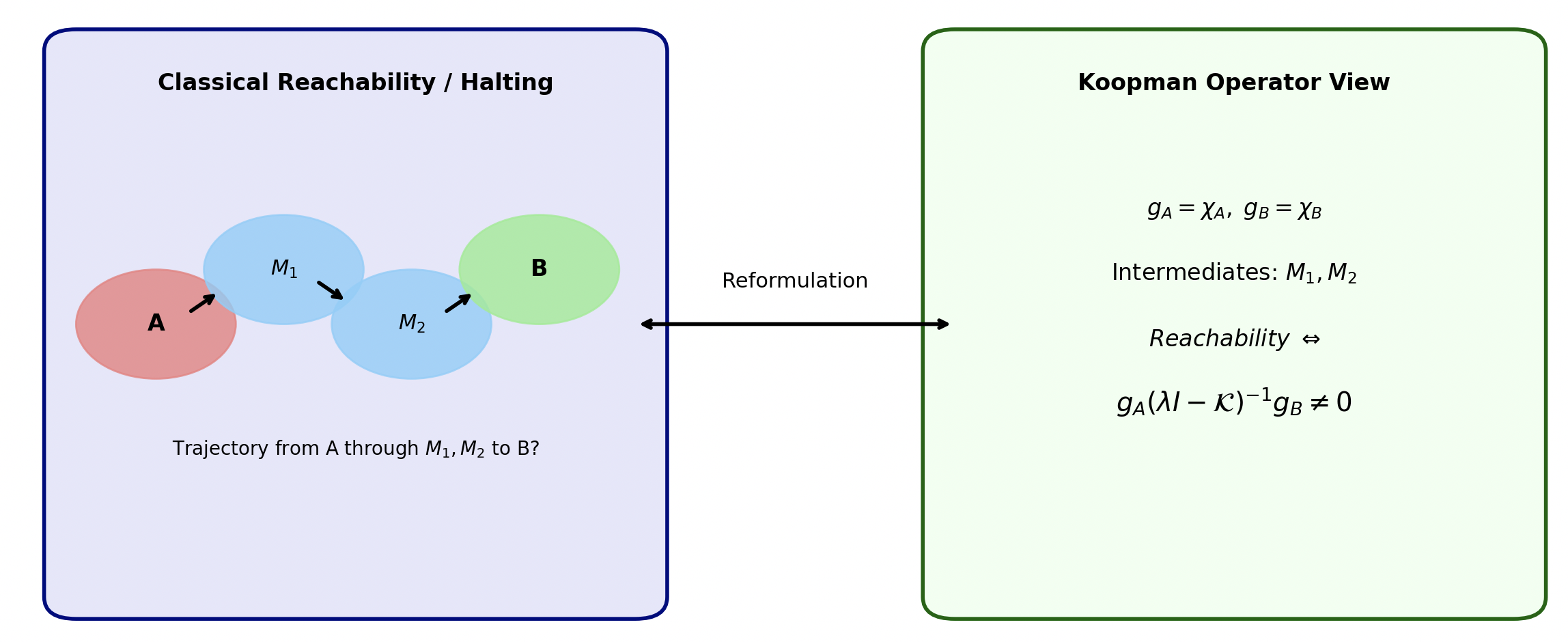}
    \caption{Set-to-set reachability (left) with intermediate sets $M_1, M_2$ is reformulated in the Koopman framework (right) as the resolvent condition $g_A(\lambda I - \mathcal{K})^{-1} g_B \neq 0$, connecting halting/reachability to operator theory.}
    \label{fig:1}
\end{figure*}

\section{Computation from the Koopman point of view}
\subsection{Koopman operator for dynamical systems }
\label{sec:koopman-intro}

In this section, we provide the definitions of the Koopman operator and state some of its basic properties of interest.

Consider a discrete‐time autonomous  dynamical system of the form
\begin{equation}
  \mathbf{x}_{t+1} \;=\; \mathbf{F}\bigl(\mathbf{x}_t\bigr),
  \quad
  \mathbf{x}_t \in X
  \quad
  t\in \mathbb{N}.
  \label{eq:discrete-map}
\end{equation}
We assume $\mathbf{F}:X \to X$ is a (possibly nonlinear) map defining the time evolution from step $t$ to $t+1$. In many applications $X$ may be a part of $\mathbb{R}^n$ or, as we shall see and detail later, the Cantor space.

Instead of tracking trajectories $\mathbf{x}(t)$ in phase space, \emph{Koopman theory} focuses on the evolution of \emph{observables}, i.e.\ scalar‐valued functions $g:X\to\mathbb{C}$. 
%
The \emph{Koopman operator} $\mathcal{K}$ associated with \eqref{eq:discrete-map} acts on observables $g$ via
\begin{equation}
  (\mathcal{K} g)\bigl(\mathbf{x}\bigr)
  \;=\;
  g\bigl(F(\mathbf{x})\bigr).
  \label{eq:koopman-definition}
\end{equation}
Because composition is linear in $g$, the map $g\mapsto \mathcal{K} g$ is a \emph{linear} operator, even though $\mathbf{F}$ may be highly nonlinear.  

Thus $\mathcal{K}$ is an infinite‐dimensional linear operator acting on a suitable function space (e.g.\ $L^2(X)$ or $C(X)$)


A principal goal in Koopman analysis is to find \emph{eigenfunctions} $\varphi(\mathbf{x})$ of the operator $\mathcal{K}$, satisfying

\begin{equation}
  (\mathcal{K}\varphi)\bigl(\mathbf{x}\bigr)
  \;=\;
  \varphi\bigl(\Phi(\mathbf{x})\bigr)
  \;=\;
 \lambda\,\varphi(\mathbf{x}),
  \label{eq:koopman-eigenfunction}
\end{equation}
%
where $\lambda\in\mathbb{C}$ is the associated eigenvalue.  These $\varphi(\mathbf{x})$ can be viewed as \emph{coordinates} in which the dynamics becomes effectively ``linear'' via multiplication by $\lambda$.  If such eigenfunctions can be identified (analytically or numerically), one obtains a spectral decomposition that can greatly simplify the study of long‐term dynamics.

This operator viewpoint has several advantages. First,
 despite $\mathbf{F}$ being nonlinear, $\mathcal{K}$ is linear in $g$.
Second, the operator’s spectrum (eigenvalues/eigenfunctions) encodes fundamental modes of the underlying dynamics, including slow/fast scales and resonances.





In many cases the dynamics unfold in continuous-time.
Consider a continuous-time \emph{autonomous} dynamical system on a state space $X\subseteq\mathbb{R}^n$:
\begin{equation}
  \frac{d\mathbf{x}}{dt} \;=\; \mathbf{f}\bigl(\mathbf{x}\bigr),
  \quad
  \mathbf{x}\in X.
  \label{eq:autonomous-system}
\end{equation}
Here, $\mathbf{f}:X\to\mathbb{R}^n$ is a (possibly nonlinear) vector field that governs the time evolution of the state $\mathbf{x}(t)$. In these case a Koopman theory also exists. In this article we focus on the discrete-time case.


The Koopman viewpoint has been introduced by Koopman and von Neumann in order to classify (discrete-time) systems, taking the eigenvalues as invariants \cite{koopman1931hamiltonian,koopman1932dynamical}. 
More recently, the Koopman viewpoint has been applied for data-driven analysis methods in dynamical systems and control theory \cite{mauroy2020koopman}, and also in the combinatorial studies of well-known dynamical systems \cite{caravellilin}. 




 \subsection{Computers as dynamical systems}


We now examine how the classical \emph{halting problem} can be reformulated within the Koopman operator perspective in a natural and rigorous way.  

Consider a discrete-time autonomous dynamical system  
\begin{equation}
\mathbf{F}: X \to X,
\end{equation}  
where $X$ is a compact metric space and $\mathbf{F}$ is continuous. In this setting, the state space $X$ represents the collection of all possible configurations of a computer, while the map $\mathbf{F}$ encodes the elementary step of computation, i.e., the update rule that advances the machine from one configuration to the next. Thus, each iteration of $\mathbf{F}$ corresponds to a unit of computational time, and the entire execution of the algorithm can be viewed as a trajectory in state space.  

The canonical example of such a computational model is the Turing machine, whose configuration is specified by the contents of its tape together with the position and state of the read/write head. Other classical abstract computing devices, such as \emph{cellular automata} or \emph{counter machines}, can also be cast in this form. All these models share a key feature: they can be described as \emph{symbolic dynamical systems}, i.e.\ systems whose state is a sequence over a finite alphabet. A natural topological model for such systems is provided by the \emph{Cantor space}, which captures infinite symbolic sequences equipped with a natural metric. This space allows one to rigorously formalize the dynamics of symbolic computers and to analyze questions such as reachability and halting in a topological and operator-theoretic setting.  

In contrast, the field of analog computation investigates computational processes in continuous dynamical systems. Here the state space is typically a subset of $\mathbb{R}^n$ or a smooth manifold, and the evolution is governed by continuous (or even differentiable) maps or flows. Examples include continuous-time differential equations, hybrid systems, and physical analog computers. In this paradigm, computation is no longer carried out on discrete symbolic states, but rather on real-valued trajectories that evolve under deterministic dynamics. Analog models often highlight the tension between computability and the need for infinite precision in real number computations.  

We view both symbolic and analog computation as special instances of a dynamical systems, and use the Koopman operator framework as a unifying lens.

\subsection{The Cantor space}
The Cantor space over a finite alphabet $\Sigma$ is defined as the set of all infinite sequences of symbols,
\begin{equation}
\Sigma^{\mathbb{N}} = \{\, x = x_0 x_1 x_2 \ldots : x_k \in \Sigma \ \text{for all } k \in \mathbb{N}\,\}.
\end{equation}
Equipped with the metric $d : X \times X \to \mathbb{R}$,
\begin{eqnarray}
&&d(x,x') = 2^{-k}  \text{  if and only if } \nonumber \\
&&x_0 = x'_0, \, x_1 = x'_1, \ldots, x_{k-1} = x'_{k-1}, \ \text{but } x_k \neq x'_k,
\end{eqnarray}
the Cantor space becomes a compact metric space. Intuitively, two sequences are close whenever they share a long common prefix. For instance, a sequence of states $\mathbf{x}_\ell$ converges to a limit $\mathbf{x}_\infty$ (as $\ell \to \infty$) precisely when, for each $k$, the $k$-th symbol of $\mathbf{x}_\ell$ eventually stabilizes to the $k$-th symbol of $\mathbf{x}_\infty$. This metric is therefore often referred to as the \emph{topology of pointwise convergence}.  

In this topology, \emph{cylinders} play a fundamental role. A cylinder set is defined by a finite prefix,  
\begin{equation}
[x_0 x_1 \ldots x_k] = \{\, y \in \Sigma^{\mathbb{N}} : y_0 = x_0, \, y_1 = x_1, \ldots, y_k = x_k \,\}.
\end{equation}
Geometrically, these sets are both closed balls of radius $2^{-k}$ and open balls of radius $2^{-k+1}$; hence they are \emph{clopen}, i.e.\ simultaneously closed and open. Moreover, for each fixed radius there are only finitely many distinct cylinders.  

Every clopen set of the Cantor space can in fact be written as a finite union of cylinders. Open sets are arbitrary (finite or infinite) unions of such clopen sets, while closed sets are arbitrary intersections of them. This combinatorial structure makes the Cantor space particularly suitable for describing symbolic dynamical systems and their computational properties.

It is important to note that the choice of the alphabet $\Sigma$ is immaterial for the topology of the Cantor space. Indeed, for any two finite alphabets $\Sigma_1$ and $\Sigma_2$, the spaces $\Sigma_1^{\mathbb{N}}$ and $\Sigma_2^{\mathbb{N}}$ can be recoded into one another in a way that preserves the topological structure (open and closed sets), though not necessarily the specific metric. Similarly, spaces such as $\Sigma^{\mathbb{Z}}$ or $\Sigma^{\mathbb{Z}^2}$ can be trivially recoded into $\Sigma^{\mathbb{N}}$ by fixing an enumeration of $\mathbb{Z}$ or $\mathbb{Z}^2$ into $\mathbb{N}$.  

A map $F : \Sigma^{\mathbb{N}} \to \Sigma^{\mathbb{N}}$ is continuous precisely when each symbol of $F(\mathbf{x})$ depends only on finitely many symbols of $\mathbf{x}$. This condition captures the local update rules characteristic of symbolic dynamics.  

From this perspective, it becomes clear that Turing machines, where the global state encodes both the tape content and the head state, evolve through repeated application of a continuous map on the Cantor space. The same reasoning applies to cellular automata and, more generally, to all standard symbolic models of deterministic computation.  

For further details, see \cite{kurka2003topo}.

\section{Koopman operator and the reachability problem} \label{sec:koop-reach}
The halting problem is classically defined for Turing machines, but natural counterparts also exist for other models of computation such as cellular automata or counter machines. Over the years, several generalizations to arbitrary dynamical systems (both symbolic and continuous) have been proposed.  

One common formalization is the \emph{set-to-set reachability problem}: given two sets $A$ and $B$ in the state space, does there exist a trajectory that starts in $A$ and eventually passes through $B$? Importantly, this formulation does not require $B$ to be invariant (i.e.\ absorbing) under the dynamics, nor does it demand an exact initial condition. Instead, it suffices that the initial state lies somewhere in $A$. This choice is natural, since initializing a computer (symbolic or analog) with infinite precision at a single point is physically unrealistic. Moreover, $A$ and $B$ must admit a finite description, such as cylinders or clopen sets in the symbolic case \cite{asarin2000approx,delvenne2006computational}.  

Other formulations of the halting problem for dynamical systems, such as \emph{point-to-point reachability}, have also been studied. For classical models of computation (Turing machines, cellular automata, counter machines), all reasonable variants are equally undecidable (r.e.-complete). However, for broader classes of dynamical systems, different formulations can lead to distinct decidability properties \cite{boj2015reachability,asarin2000approx}.  
We stress that below in Section \ref{sec:koopman-automata-halting}, when considering finite automaton, we consider a different variant than in this section.

In this work, we introduce a \emph{Koopman-based abstraction} of the reachability problem. Specifically, we represent the dynamical system not directly on its state space but in the space of observables. Let $\mathcal{C}(X)$ denote the space of continuous complex-valued functions on $X$, which is a Banach space under the supremum norm. The dynamics is lifted to this space via the Koopman operator,
\begin{equation}
\mathcal{K}: \mathcal{C}(X) \to \mathcal{C}(X),
\end{equation}
defined by $(\mathcal{K}g)(x) = g(F(x))$ for $g \in \mathcal{C}(X)$.  

To connect this operator-theoretic setting with computability, we assume that $\mathcal{C}(X)$ is endowed with a \emph{computable structure}, in the sense of Pour-El and Richards \cite{pourel1989computability}. This allows one to meaningfully discuss the algorithmic properties of $\mathcal{K}$, and in particular to reformulate halting and reachability questions in terms of operator spectra and resolvents.

To make the connection between computation and dynamics precise, we restrict attention to a \emph{dense countable family of observables} $(g_k)_{k \in \mathbb{N}} \subset \mathcal{C}(X)$. These observables play the role of functions with a finite description, indexed by $k \in \mathbb{N}$, in analogy with the rationals (dense and countable) within the reals. We assume that:  
1. Each norm $\|g_k\|$ is a computable real number, i.e.\ it can be effectively approximated by rationals to arbitrary precision.  
2. The family $(g_k)$ is closed under finite linear combinations with rational complex coefficients.  
3. Since $\mathcal{C}(X)$ is a Banach algebra under pointwise multiplication, we further assume that $(g_k)$ is closed under product.  

These conditions guarantee that the family $(g_k)$ provides an effective algebra of observables dense in $\mathcal{C}(X)$.  

We now introduce the notion of a \emph{computable linear operator} in this setting.  

\begin{definition}[Computable operator]
Let $(g_k)_{k \in \mathbb{N}} \subset \mathcal{C}(X)$ be a family of observables as above.
A continuous linear operator $\mathcal{P} : \mathcal{C}(X) \to \mathcal{C}(X)$ is said to be \emph{computable} if the following holds. Suppose $(g_{k_n})_{n \in \mathbb{N}}$ is a Cauchy sequence converging rapidly to $g \in \mathcal{C}(X)$, i.e.\
\begin{equation}
\| g_{k_n} - g \| \leq 2^{-n}.
\end{equation}
Then, from the sequence of indices $(k_n)$ one can algorithmically compute a new sequence $(k_m)$ such that $(g_{k_m})_{m \in \mathbb{N}}$ is a fast Cauchy sequence converging to $\mathcal{P} g$. In other words, arbitrarily accurate approximations of $\mathcal{P} g$ can be effectively obtained from arbitrarily accurate approximations of $g$.
\end{definition}

In particular, we assume that the Koopman operator $\mathcal{K}$ associated with $\mathbf{F}$ is computable in this sense. This reflects the intuition that a single update step of the dynamical system should be computationally simple, while complexity arises from properties of arbitrarily long trajectories (as in the halting problem for Turing machines).  

To begin with, we assume that the Koopman operator $\mathcal{K}$ is computable. This assumption reflects the intuition that, in analogy with Turing machines or cellular automata, a single update step of the system should be algorithmically simple to describe and predict. The richness and potential difficulty of the dynamics only emerge from the properties of arbitrarily long trajectories, precisely as in the classical halting problem for Turing machines.  

Recall that the \emph{spectrum} of the Koopman operator is defined as
\begin{eqnarray}
&&\sigma(\mathcal{K}) \nonumber \\
&&= \{ \lambda \in \mathbb{C} \, : \, (\lambda I - \mathcal{K})^{-1}\nonumber \\
&&\hspace{1cm}\text{ does not exist as a bounded linear operator} \}.\nonumber
\end{eqnarray}
This set includes the eigenvalues, i.e.\ scalars $\lambda$ for which there exists $\varphi \neq 0$ with $\mathcal{K}\varphi = \lambda \varphi$.  

It is a standard result that $\mathcal{K}$ always admits a dominant eigenvalue $\lambda = 1$, corresponding to the constant observable $\phi(x) \equiv 1$. All other spectral values lie inside the unit disk, i.e.\ $|\lambda| \leq 1$. Moreover, whenever $|\lambda| > 1$, the resolvent operator can be expressed by the convergent Neumann series
\begin{equation}
(\lambda I - \mathcal{K})^{-1} = \lambda^{-1} \sum_{k=0}^{\infty} \lambda^{-k} \, \mathcal{K}^k,
\end{equation}
which converges in the operator norm.  

To connect these spectral properties with the halting problem, we focus on symbolic dynamical systems defined over the Cantor space. In this setting, a natural dense family of observables is given by the \emph{piecewise constant continuous functions} taking finitely many rational (real or complex) values. Their level sets are clopen sets, and hence these observables can be spanned by characteristic functions of clopen subsets of the Cantor space. In particular, such characteristic functions $\chi_A$ take values in $\{0,1\}$ and form a basis for describing finitely describable states of the system.  

Thus, in the symbolic case, the Koopman operator acts on a dense algebra of piecewise constant observables, and questions of reachability (and hence halting) can be rephrased in terms of the action of the resolvent $(\lambda I - \mathcal{K})^{-1}$ on these characteristic functions.
Consider first the symbolic case, and let $g_B = \chi_B$ be the characteristic function of a clopen set $B \subset X$. Then the action of the resolvent satisfies
\begin{equation}
(\lambda I - \mathcal{K})^{-1} g_B (x) \neq 0
\quad \Longleftrightarrow \quad
\exists \, t \in \mathbb{N} \ \text{such that } \mathbf{F}^t(x) \in B.
\end{equation}
In other words, $(\lambda I - \mathcal{K})^{-1} g_B$ takes nonzero values exactly on those points whose trajectories eventually enter $B$.  

Now let $g_A = \chi_A$ denote the characteristic function of another clopen set $A \subset X$. The observable
\begin{equation}
g_A \, (\lambda I - \mathcal{K})^{-1} g_B
\end{equation}
is identically zero if and only if no trajectory starting in $A$ ever reaches $B$. Hence, deciding whether
\begin{equation}
g_A \, (\lambda I - \mathcal{K})^{-1} g_B \neq 0
\end{equation}
is precisely equivalent to solving the reachability problem: does there exist an orbit beginning in $A$ that eventually intersects $B$?  

This observation suggests a general abstraction. If the problem is undecidable, the dynamical system exhibits nontrivial computational power, comparable to Turing universality. If the problem is decidable, the system must have strictly weaker, sub-Turing computational capabilities.  

\begin{definition}[Koopman Halting Problem]
Let $(g_k)_{k \in \mathbb{N}}$ be a dense, computable family of observables in $\mathcal{C}(X)$, and let $\mathcal{K}$ denote the Koopman operator associated with a dynamical system $F : X \to X$ on a compact metric space. For integers $a,b$ and rational $\lambda \in \mathbb{Q}$ with $|\lambda| > 1$, the \emph{Koopman Halting Problem} is the decision problem:
\begin{equation}
\text{Does } \quad g_a \, (\lambda I - \mathcal{K})^{-1} g_b \neq 0 ?
\end{equation}
Equivalently, the problem asks whether the set $B$ associated with $g_b$ is reachable from the set $A$ associated with $g_a$ under the dynamics of $F$. 
\end{definition}
This picture is shown in Fig.~\ref{fig:1}.
In what follows, we establish that the Koopman Halting Problem is \emph{recursively enumerable} (r.e.).  

\begin{theorem}
Let $\mathbf{F}: X \to X$ be a dynamical system on a compact metric space $X$, with associated Koopman operator $\mathcal{K}$ acting on complex-valued observables. Suppose there exists a countable family $(g_k)_{k \in \mathbb{N}} \subset \mathcal{C}(X)$ such that:
\begin{enumerate}
    \item linear combinations with rational coefficients are computable,
    \item pointwise products are computable,
    \item the action of $\mathcal{K}$ is computable,
\end{enumerate}
in the sense of Pour-El and Richards. Then the Koopman Halting Problem is recursively enumerable.  
\end{theorem}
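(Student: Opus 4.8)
The plan is to exhibit a semi-decision procedure that halts precisely on the ``yes'' instances, i.e.\ whenever $g_a\,(\lambda I-\mathcal{K})^{-1}g_b\neq 0$. The engine is the Neumann expansion: since $|\lambda|>1$ and $\|\mathcal{K}\|\le 1$ on $\mathcal{C}(X)$ (the Koopman operator is a contraction in the supremum norm, because $\|\mathcal{K}g\|=\sup_x|g(F(x))|\le\|g\|$), the resolvent series
\begin{equation}
(\lambda I-\mathcal{K})^{-1}g_b=\lambda^{-1}\sum_{k=0}^{\infty}\lambda^{-k}\,\mathcal{K}^k g_b
\end{equation}
converges geometrically in operator norm. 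I would then work with the truncated observable
\begin{equation}
P_N=g_a\cdot\Big(\lambda^{-1}\sum_{k=0}^{N}\lambda^{-k}\,\mathcal{K}^k g_b\Big),
\end{equation}
whose limit is the target $P=g_a\,(\lambda I-\mathcal{K})^{-1}g_b$.

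First I would argue that each $P_N$ is a \emph{computable} element of $\mathcal{C}(X)$, using the three hypotheses in sequence: iterating the computable operator $\mathcal{K}$ produces fast Cauchy approximations to $\mathcal{K}^k g_b$ for every $k$; closure under rational linear combinations (hypothesis~1) assembles the partial sum; and closure under pointwise products (hypothesis~2) yields $P_N$. Because the dense family has computable norms and $P_N$ comes equipped with a fast Cauchy sequence $(g_{k_m})$ with $\|g_{k_m}-P_N\|\le 2^{-m}$, the real number $\|P_N\|$ is itself computable uniformly in $N$, so we can produce rational approximations to any prescribed accuracy.

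Next I would make the truncation error explicit. Using $\|\mathcal{K}^k g_b\|\le\|g_b\|$ together with submultiplicativity of the Banach-algebra norm and summing the geometric tail gives
\begin{equation}
\|P-P_N\|\le\|g_a\|\,\|g_b\|\,\frac{|\lambda|^{-N-2}}{1-|\lambda|^{-1}}=:T_N,
\end{equation}
where $T_N$ is a computable real decreasing geometrically to $0$. The procedure then runs over $N=0,1,2,\dots$: at stage $N$ it computes a rational lower bound $\ell_N$ for $\|P_N\|$ (accurate to $2^{-N}$) together with $T_N$, and it halts with output ``nonzero'' as soon as $\ell_N>T_N$. If $P\neq 0$ then $\|P\|>0$, and taking $N$ large enough forces $T_N<\tfrac13\|P\|$ while $\|P_N\|\ge\|P\|-T_N$, so $\ell_N$ eventually exceeds $T_N$ and the procedure halts; conversely, if $\ell_N>T_N$ at some stage then $\|P\|\ge\|P_N\|-\|P-P_N\|>0$, certifying soundness. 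Hence the ``yes'' instances coincide with the halting set of the procedure, which is the definition of recursive enumerability.

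The step I expect to be the main obstacle is justifying that nonvanishing of the \emph{limit} observable can be certified from finitely much data, i.e.\ that the test $\ell_N>T_N$ is sound. This rests entirely on having an \emph{a priori} computable tail bound $T_N$, which in turn depends on the contraction estimate $\|\mathcal{K}\|\le 1$ and on the strict inequality $|\lambda|>1$; without it the Neumann series need not converge and no such bound exists, which is exactly why the definition restricts to $|\lambda|>1$. I would emphasize that the procedure is inherently one-sided: a genuinely zero resolvent product can never be detected in finite time, since the approximations $\ell_N$ decay to $0$ but never certify exact vanishing (confirming $\|P\|=0$ would amount to ruling out reachability over the whole infinite trajectory). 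This asymmetry---nonzero is confirmable, zero is not---is precisely what makes the Koopman Halting Problem recursively enumerable rather than decidable, in direct parallel with the classical halting problem.
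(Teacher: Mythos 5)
Your proposal is correct and follows essentially the same route as the paper: expand the resolvent in the Neumann series, use the computability of $\mathcal{K}$, products, and rational linear combinations to effectively approximate the partial sums, and exploit the geometric decay of the tail to certify in finite time that the norm of the limit is strictly positive whenever it is nonzero. You merely make explicit the a priori tail bound $T_N$ and the stopping rule $\ell_N>T_N$ that the paper leaves implicit in its ``estimate within $2^{-\ell-2}$'' argument.
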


\begin{proof}
Assume that $g_a (\lambda I - \mathcal{K})^{-1} g_b \neq 0$ for some $a,b \in \mathbb{N}$ and rational $\lambda$ with $|\lambda|>1$. Using the Neumann series, we have
\begin{equation}
g_a (\lambda I - \mathcal{K})^{-1} g_b
= \lambda \sum_{n=0}^{\infty} g_a (\lambda^{-1}\mathcal{K})^n g_b.
\end{equation}
The terms of this series decrease at a predictable rate $\mathcal{O}(|\lambda^{-1}|^n)$. Since the Koopman operator and algebraic operations are computable on $(g_k)$, each partial sum can be approximated effectively and to arbitrary precision.  

If $g_a (\lambda I - \mathcal{K})^{-1} g_b \neq 0$, then its norm is bounded below by some $2^{-\ell}$. A fast Cauchy sequence of approximations will eventually yield an estimate within $2^{-\ell-2}$ of the true value, from which we can algorithmically certify that the norm is at least $2^{-\ell-2} > 0$. Thus, whenever the answer is \emph{yes}, we can prove it in finite time. This establishes recursive enumerability.  
\end{proof}

\noindent
As in the classical case of Turing machines, the Koopman Halting Problem is therefore r.e. If it were shown to be r.e.-complete (that is, as hard as the halting problem of a universal Turing machine), this would imply that the underlying dynamical system is \emph{computationally universal}. Conversely, if the Koopman Halting Problem is decidable for a given system, then that system cannot match the computational power of a universal Turing machine.  

This formulation has the advantage of being both general and flexible. It does not depend on the specific nature of the state space (whether Euclidean, Cantor, or otherwise), but only on the choice of an effective dense family $(g_k)_{k \in \mathbb{N}}$. Examples include characteristic functions of clopen sets in Cantor space, polynomial observables, or piecewise linear maps on hypercubes $[0,1]^n$. Moreover, the construction extends naturally to continuous-time systems, and it is grounded in the well-established Pour-El–Richards computability theory for Banach spaces.

\section{Equicontinuous Symbolic Dynamical Systems}

In this section we show that equicontinuous symbolic systems admit a \emph{decidable} Koopman Halting Problem, and hence cannot serve as universal computers. This result is consistent with the findings of Delvenne–Kůrka–Blondel (DKB) on symbolic dynamical systems, although it is not strictly equivalent due to differences in the definition of universality.  

The theorem also highlights a key intuition: computational universality requires a certain degree of sensitivity to initial conditions, whereas equicontinuity guarantees the opposite property. Beyond its intrinsic value, this case study illustrates how the Koopman Halting Problem provides a rigorous framework to investigate the interplay between dynamical and computational properties.  

\begin{definition}[Equicontinuity]
A dynamical system $F : X \to X$ on a compact metric space $X$ is said to be \emph{equicontinuous} if for every $\epsilon > 0$, there exists $\delta > 0$ such that for all $x, x' \in X$ with $d(x,x') < \delta$, we have
\begin{equation}
d(F^n(x), F^n(x')) < \epsilon \quad \text{for all } n \in \mathbb{N}.
\end{equation}
Equivalently, the family $\{F^n : n \in \mathbb{N}\}$ is equicontinuous. Intuitively, nearby initial conditions remain close at all future times, and the system exhibits no sensitivity to initial conditions.
\end{definition}

\begin{theorem}
Let $\mathbf{F} : X \to X$ be an equicontinuous symbolic dynamical system on the Cantor space $X$. For any piecewise constant continuous observable $g \in \mathcal{C}(X)$, the span of its iterates
\begin{equation}
\operatorname{span}\{ g, \mathcal{K}g, \mathcal{K}^2 g, \mathcal{K}^3 g, \ldots \}
\end{equation}
is finite-dimensional, and its dimension is computable.
\end{theorem}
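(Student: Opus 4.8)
The plan is to exploit the fact that a piecewise constant continuous observable on the Cantor space is \emph{locally constant}: its finitely many level sets are clopen, hence finite unions of cylinders, so $g$ depends only on the first $m$ symbols of its argument for some $m$. Writing $\pi_m$ for the projection onto the first $m$ coordinates, we have $g = h\circ\pi_m$. The whole argument then reduces to showing that \emph{every} iterate $\mathcal{K}^n g = g\circ F^n$ also factors through the projection onto a \emph{fixed} finite number of coordinates, independent of $n$; the uniformity in $n$ is exactly where equicontinuity enters.

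To obtain this, first I would restate equicontinuity in the language of cylinders. Recall that $d(x,x')\le 2^{-m}$ holds precisely when $x$ and $x'$ agree on the first $m$ symbols. Applying the definition with $\epsilon = 2^{-m}$ produces an integer $k$ such that agreement of $x,x'$ on the first $k$ symbols forces $F^n(x)$ and $F^n(x')$ to agree on the first $m$ symbols \emph{for all} $n\in\mathbb{N}$. Since $g$ is determined by the first $m$ symbols of its argument, this yields $g(F^n(x)) = g(F^n(x'))$ whenever $x,x'$ share their first $k$ symbols, i.e.\ each $\mathcal{K}^n g$ factors through $\pi_k$. Consequently all iterates lie in the space of functions that are constant on cylinders of length $k$, which has dimension $|\Sigma|^{k}$. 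The span of the iterates is therefore a subspace of a fixed finite-dimensional space, establishing finite-dimensionality.

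For the computability of the dimension I would use the fact that $V=\operatorname{span}\{\mathcal{K}^n g : n\ge 0\}$ is $\mathcal{K}$-invariant by construction. A short induction shows that as soon as $\mathcal{K}^N g$ becomes linearly dependent on $\{g,\mathcal{K}g,\dots,\mathcal{K}^{N-1}g\}$, every later iterate is dependent as well; hence $\dim V$ equals the least such $N$, and $\{g,\dots,\mathcal{K}^{N-1}g\}$ is a basis. The algorithm therefore computes the iterates one at a time. Because $g$ is locally constant with rational values and $\mathcal{K}$ is computable, each $\mathcal{K}^n g$ is again a locally constant rational-valued function and can be represented exactly by its values on cylinders of some length $r_n$; refining all iterates computed so far to a common cylinder length turns the linear-dependence test into exact Gaussian elimination over $\mathbb{Q}$ (or $\mathbb{Q}[i]$). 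One increments $N$ until the test first succeeds, and outputs $N$.

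The delicate point --- and the step I expect to be the main obstacle --- is termination of this search \emph{without} assuming that the equicontinuity modulus $k$ can be computed from the description of $F$. The intended resolution is that $k$ is never needed as an input: finite-dimensionality, proved above, guarantees that a linear dependence must appear after at most $|\Sigma|^{k}$ steps, so the self-terminating dependence search is correct even though $k$ enters only in the (non-constructive) termination proof. The remaining care is to make the representation of each $\mathcal{K}^n g$ genuinely effective --- one must compute, from the local rule of $F$, how many input symbols determine the first $m$ symbols of $F^n$, so that the exact cylinder representation, and hence the linear-algebra test, is well defined at every stage.
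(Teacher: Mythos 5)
Your proposal is correct and follows essentially the same route as the paper: equicontinuity forces every iterate $\mathcal{K}^n g$ to be constant on cylinders of a fixed length $k$ (equivalently, the preimages $F^{-n}$ of the clopen level sets are unions of cylinders of uniformly bounded length), so the whole Krylov space sits inside a fixed space of dimension $|\Sigma|^k$. The only divergence is in the computability step, where the paper detects eventual periodicity of the preimage sets while you detect the first linear dependence among the iterates; your version is marginally cleaner, since it outputs the dimension directly and makes explicit that the (possibly uncomputable) modulus $k$ is needed only in the termination proof and never as an input to the algorithm, a point the paper glosses over.
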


\begin{proof}
Cylinders of the form $A = [x_0 x_1 \ldots x_\ell]$ in the Cantor space are precisely the open balls of radius $2^{-\ell}$. For any fixed $\ell$, there are only finitely many such balls.  

By equicontinuity, the preimages $F^{-n}(A)$ form unions of $\delta$-balls for sufficiently small $\delta > 0$. Since there are finitely many $\delta$-balls, the sequence of preimages must eventually become periodic, i.e.\ $F^{-n}(A) = F^{-m}(A)$ for some $m > n \geq 0$.  

For $g = \chi_A$, the characteristic function of a cylinder $A$, this implies that the sequence
\begin{equation}
\mathcal{K}g, \ \mathcal{K}^2 g, \ \mathcal{K}^3 g, \ \ldots
\end{equation}
is eventually periodic, and hence generates a finite-dimensional subspace of $\mathcal{C}(X)$.  

Since every piecewise-constant observable $g$ is a finite linear combination of cylinder characteristic functions, it follows that the span of its iterates under $\mathcal{K}$ is finite-dimensional. Moreover, the dimension is computable, since the periodicity of preimages can be detected algorithmically.  
\end{proof}

\noindent
As a consequence, expressions of the form
\begin{equation}
g_a (\lambda I - \mathcal{K})^{-1} g_b
\end{equation}
can be reduced to explicit computations in a finite-dimensional subspace spanned by $g_a, g_b, \mathcal{K}g_b, \mathcal{K}^2 g_b, \ldots$  Equality with zero can then be decided algorithmically. Thus, for equicontinuous symbolic systems, the Koopman Halting Problem is decidable.



\begin{figure*}
    \centering
    \includegraphics[width=0.95\linewidth]{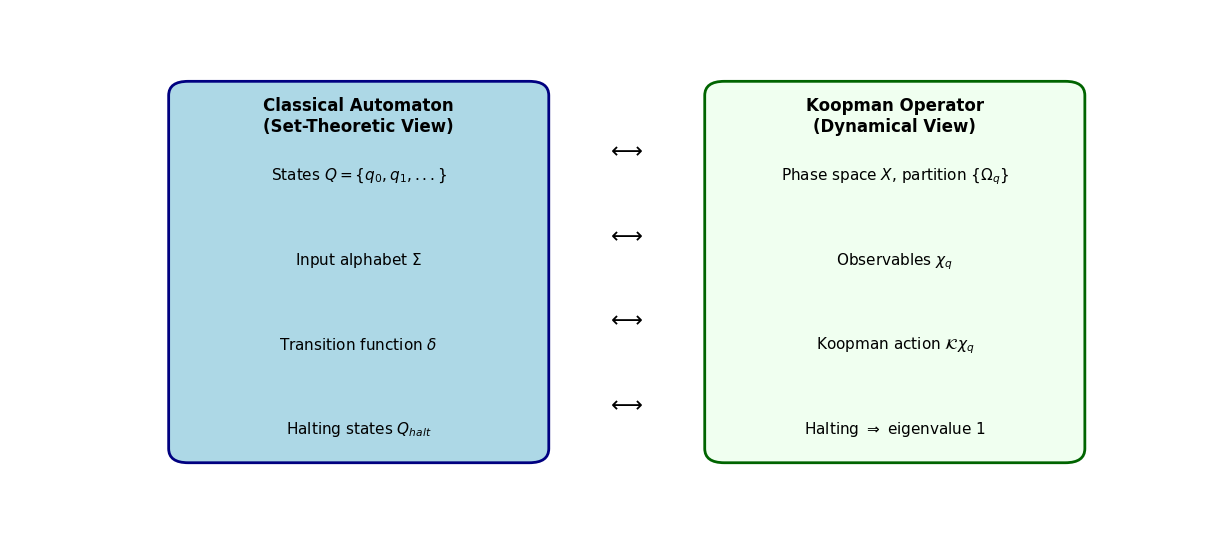}
    \caption{Mapping between classical automata (left) and Koopman operator framework (right). States correspond to partitions, transitions to Koopman action, and halting states to absorbing eigenfunctions.}
    \label{fig:2}
\end{figure*}
\section{Koopman Operators in the Non-Autonomous Case}

\subsection{External inputs}
So far, we have considered the Koopman operator for autonomous systems. In many models of computation, however, the dynamics are not strictly autonomous but rather \emph{driven by external inputs}. A canonical example is a finite automaton, where transitions depend not only on the current state but also on a symbol read from an external tape.  

In the discrete-time setting, such a non-autonomous system can be written as
\begin{equation}
  x_{k+1} \;=\; F(x_k, u_k),
  \label{eq:discrete-nonautonomous}
\end{equation}
where $x_k \in X$ is the system state and $u_k \in U$ is an external input (e.g.\ a tape symbol or control signal) at time $k$. To capture the joint evolution, we define an extended state
\begin{equation}
z_k = (x_k, u_k) \in Z = X \times U,
\end{equation}
with dynamics
\begin{equation}
z_{k+1} = G(z_k).
\end{equation}
On this extended space, one can define a Koopman operator $\mathcal{K}$ acting on observables $\phi : Z \to \mathbb{C}$ by
\begin{equation}
(\mathcal{K}\phi)(z) = \phi(G(z)).
\end{equation}
The operator remains linear in $\phi$, but the presence of the input $u_k$ makes the system no longer purely autonomous in $x_k$.  

\medskip

This viewpoint aligns naturally with models of digital computation. At a high level, a classical digital computer can be abstracted as a finite or pushdown automaton:  
\begin{itemize}
  \item A \emph{finite automaton} has a finite set of internal states $Q$, with transitions driven by input symbols.  
  \item A \emph{pushdown automaton} extends this model with a stack, allowing unbounded memory growth and shrinkage.  
\end{itemize}

From a dynamical perspective, one can imagine a physical substrate (e.g.\ transistors, voltages, signals) whose high-dimensional continuous dynamics collapse, under coarse-graining, to a finite set $Q$ of discrete machine states. The input sequence $(u_k)$ corresponds to symbols provided by the tape or the user, and determines the transitions between these coarse-grained regions of phase space.  

Formally, let $\{\Omega_s\}_{s \in Q}$ be a partition of the state space $X$ into regions labeled by machine states. We may then define observables that detect which region the system occupies:
\begin{equation}
  g_s(x) \;=\;
  \begin{cases}
    1, & \text{if } x \in \Omega_s, \\
    0, & \text{otherwise},
  \end{cases}
  \quad s \in Q.
\end{equation}
These $g_s$ are coarse-grained characteristic functions, and under the Koopman operator they evolve according to the transition logic of the underlying automaton.  
Under the extended Koopman operator $\mathcal{K}$ (which accounts for both the state $\mathbf{x}$ and the input $\mathbf{u}$), each coarse-grained observable $g_s$ evolves by composition:
\begin{equation}
  (\mathcal{K} g_s)(\mathbf{x}, \mathbf{u})
  \;=\;
  g_s\bigl(F(\mathbf{x},\mathbf{u})\bigr).
\end{equation}
At a discrete step, $(\mathcal{K} g_s)$ simply checks whether the system flows from region $\Omega_s$ to $\Omega_{s'}$ under the given input. In this way, the Koopman operator reproduces the transition logic of the automaton in a purely operator-theoretic framework.  

\medskip
\noindent\textbf{Example.}  
Consider a two-state automaton with $Q=\{q_0,q_1\}$ and input alphabet $\Sigma=\{a,b\}$. Its transition rules are:
\begin{equation}
q_0 \xrightarrow{a} q_1,
\quad
q_0 \xrightarrow{b} q_0,
\quad
q_1 \xrightarrow{a} q_1,
\quad
q_1 \xrightarrow{b} q_0.
\end{equation}
We define a partition of the state space $X$ into $\Omega_{q_0}$ and $\Omega_{q_1}$, and corresponding observables $g_{q_0}, g_{q_1}$. If the system is currently in $\Omega_{q_0}$ and receives input $a$, then
\begin{equation}
(\mathcal{K} g_{q_1})(x,a) = 1 \quad \text{and} \quad (\mathcal{K} g_{q_0})(x,a) = 0,
\end{equation}
meaning the system transitions into $\Omega_{q_1}$. Similarly, if the input is $b$, the Koopman operator preserves $g_{q_0}$, reflecting the self-loop at $q_0$. In this way, the Koopman operator acting on characteristic observables exactly recovers the automaton’s transition graph.  

\medskip
In the purely \emph{autonomous} case (such as a finite automaton with no external input), the system transitions depend only on the current state. In contrast, most realistic computational models include external inputs (e.g.\ tape symbols or user commands). This corresponds to the discrete-time formulation
\begin{equation}
  x_{k+1} = F(x_k, u_k),
\end{equation}
where $u_k$ is the input at step $k$. Defining the extended state $z_k=(x_k,u_k)$ yields an autonomous map
\begin{equation}
z_{k+1} = G(z_k),
\end{equation}
and hence a Koopman operator on the extended space. Observables are now functions of both $(x,u)$, and the operator $\mathcal{K}$ remains linear in this function space. This construction allows one to capture both autonomous and input-driven dynamics within the same operator-theoretic framework.  

\subsection{Koopman formalization of coarse-grained automata and halting}
\label{sec:koopman-automata-halting}

In the introduction, we framed computation in dynamical systems primarily in terms of 
\emph{reachability}: whether a trajectory starting in one region of state space can eventually arrive in another. 
This is the most general notion and applies both to symbolic and analog systems. 
However, reachability alone does not capture the full classical notion of ``halting'' familiar from Turing machines, 
where a computation terminates once the machine enters a designated absorbing configuration. 
In order to speak of halting explicitly, one must assume additional structure on the system. 
In particular, we now focus on \emph{driven} systems, i.e. dynamical systems with external inputs, 
together with a coarse-graining of the state space into finitely many regions corresponding to logical states. 
This coarse-graining allows us to identify absorbing subsets that play the role of halting states, 
and to characterize them spectrally via the Koopman operator.  

\medskip

We therefore view a digital computer (or more generally, any abstract automaton) 
as a coarse-grained driven dynamical system. 
Let $X$ denote a high-dimensional physical phase space, and suppose we partition it into finitely many 
\emph{coarse-grained regions} $\{\Omega_q\}_{q \in Q}$, where $Q$ is a finite index set of internal states. 
Each region $\Omega_q \subset X$ collects all microscopic states consistent with the logical state $q$ of the automaton.  

For each $q \in Q$, define the characteristic observable
\begin{equation}
  \chi_q(x) =
  \begin{cases}
    1, & x \in \Omega_q, \\
    0, & x \notin \Omega_q,
  \end{cases}
\end{equation}
which detects whether the system is in state $q$.  

In the driven case, the dynamics evolve according to
\begin{equation}
  x_{k+1} = F(x_k, u_k),
\end{equation}
with external input $u_k \in U$.  
To capture the joint state and input, we extend the phase space to $Z = X \times U$, where $z_k = (x_k, u_k)$. 
The Koopman operator acts on observables $\phi : Z \to \mathbb{C}$ via
\begin{equation}
  (\mathcal{K}\phi)(x,u) = \phi(F(x,u), u').
\end{equation}
This setting is illustrated schematically in Fig.~\ref{fig:2}.  

In the coarse-grained picture, a transition $q \to q'$ under input symbol $\sigma \in \Sigma$ means that 
whenever $x_k \in \Omega_q$ and $u_k$ corresponds to $\sigma$, the next state satisfies 
$x_{k+1} \in \Omega_{q'}$ (deterministically or with high probability). Formally,
\begin{equation}
  F(\Omega_q, \sigma) \subseteq \Omega_{q'}.
\end{equation}
This induces a transition graph $q \xrightarrow{\sigma} q'$, reproducing the structure of a finite automaton 
(or, with additional memory, a pushdown automaton).  

If the input space $U$ is continuous or high-dimensional, but only certain symbolic distinctions are relevant,  we introduce a coarse-graining map
\begin{equation}
  \Theta : U \;\to\; \Sigma,
\end{equation}
where $\Sigma$ is a finite alphabet and $U_\sigma = \Theta^{-1}(\sigma)$ denotes the region of input space 
corresponding to symbol $\sigma$. 
Each input $u_k$ therefore determines a unique symbol $\sigma = \Theta(u_k)$, 
which drives the transition $q \mapsto q'$ in the coarse-grained automaton.  

\medskip

\textit{Halting states.}  
In classical automata or Turing machines, one designates a subset $Q_{\mathrm{halt}} \subseteq Q$ of \emph{halting states}. 
Once the machine enters any $q_{\mathrm{halt}} \in Q_{\mathrm{halt}}$, it ceases further processing. 
This is in contrast with the reachability problem studied earlier, where the target set need not be absorbing. 
We now formalize halting in the dynamical setting:  

\begin{definition}[Halting state]
Let $F : X \to X$ be a discrete-time dynamical system on a compact metric space $X$, 
and suppose the state space is coarse-grained into finitely many disjoint regions 
$\{\Omega_q\}_{q \in Q}$. A set $\Omega_h \subseteq X$ is called a \emph{halting region}  
if it is absorbing under the dynamics:
\begin{equation}    
F(\Omega_h) \subseteq \Omega_h.
\end{equation}
Equivalently, once a trajectory enters $\Omega_h$, it remains there for all subsequent iterations.  

In terms of the Koopman operator $\mathcal{K}$, the characteristic observable $\chi_{\Omega_h}$ 
of the halting region satisfies
\begin{equation}    
\mathcal{K} \chi_{\Omega_h} = \chi_{\Omega_h},
\end{equation}
so $\chi_{\Omega_h}$ is an eigenfunction of $\mathcal{K}$ with eigenvalue $1$.  
Thus, halting in a coarse-grained driven dynamical system corresponds to reachability 
into an absorbing region, after which the system remains trapped there forever.
\end{definition}

\medskip

From the Koopman perspective, then, a halting condition corresponds precisely to an absorbing region in phase space: 
once the system’s coarse-grained state is $q_{\mathrm{halt}}$, the observable $\chi_{q_{\mathrm{halt}}}$ is invariant.  

In the classical theory of computation, a problem is \emph{decidable} if there exists a finite procedure that halts 
in a yes/no state for every input string in some language. In the Koopman viewpoint this becomes:  
\begin{itemize}
\item If an input string belongs to the language, then the system eventually flows into $\Omega_{q_{\mathrm{halt}}}$.  
\item If an input string is not in the language, the system never reaches $\Omega_{q_{\mathrm{halt}}}$.  
\end{itemize}

\begin{remark}
There is an important distinction between finite automata and Turing machines. 
In finite automata, acceptance depends on the sequence of external inputs (an open-loop process). 
By contrast, in Turing machines the system is autonomous once initialized: 
the computation unfolds solely from the initial condition. 
The Koopman formulation is compatible with both perspectives, 
but the notion of undecidability applies in the Turing case.  
\end{remark}

\medskip

\noindent\textbf{Example.}  
Let $Q=\{q_0,q_1\}$ be the set of states and $\Sigma=\{a,b\}$ the input alphabet, with transition rules
\begin{equation}
q_0 \xrightarrow{a} q_1, 
\quad q_0 \xrightarrow{b} q_0, 
\quad q_1 \xrightarrow{a} q_1, 
\quad q_1 \xrightarrow{b} q_1,
\end{equation}
and no halting states.  

To embed this in a dynamical framework, partition the phase space $X$ into $\Omega_{q_0}, \Omega_{q_1}$ 
corresponding to the states, and the input space $\mathcal{U}$ into $U_a, U_b$ corresponding to the symbols. 
The update rule $F$ then satisfies
\begin{eqnarray}
&&F(\Omega_{q_0}, U_a) \subseteq \Omega_{q_1}, \\
&&F(\Omega_{q_0}, U_b) \subseteq \Omega_{q_0}, \\
&&F(\Omega_{q_1}, U_a) \subseteq \Omega_{q_1}, \\
&&F(\Omega_{q_1}, U_b) \subseteq \Omega_{q_1}.\\
\end{eqnarray}

Defining characteristic observables $\chi_{q_0}, \chi_{q_1}$, one can track how $\mathcal{K}\chi_{q_0}$ or $\mathcal{K}\chi_{q_1}$ evolves under different inputs. 
The Koopman operator acting on these observables exactly reproduces the transition graph of the automaton.  

If we were to introduce a halting state $q_h$, the corresponding absorbing condition would be
\begin{equation}
F(\Omega_{q_h}, U_\sigma) \subseteq \Omega_{q_h}
\quad \text{for all } \sigma \in \Sigma,
\end{equation}
so that $\chi_{q_h}$ becomes an eigenfunction with eigenvalue $1$.  

Thus, coarse-graining the continuous phase space $X$ into regions $\{\Omega_q\}_{q \in Q}$ 
yields a finite automaton viewpoint on a physical system. 
Inputs $u_k$ are mapped into symbols in $\Sigma$, while halting states correspond to absorbing regions. 
The Koopman operator, acting on the characteristic observables $\chi_q$, reproduces the transition structure of the automaton. 
Classical notions of halting and decidability therefore remain valid in this setting, 
but are reformulated in an operator-theoretic framework that also unifies them with reachability.

\subsection{Koopman and Set-Theoretic Automata}
\label{sec:koopman-vs-set-theoretic}

At first sight, the Koopman operator framework developed in Sections~\ref{sec:koopman-intro}--\ref{sec:koopman-automata-halting} may appear distant from the classical \emph{set-theoretic} approach in theoretical computer science. In the latter, one typically defines:
\begin{itemize}
\item a (finite or countable) set of states $Q$,  
\item a finite input alphabet $\Sigma$,  
\item a transition function, which depends on the model:
  \begin{itemize}
  \item For a \emph{finite automaton}: $\delta : Q \times \Sigma \to Q$,  
  \item For a \emph{pushdown automaton}: $\delta : Q \times \Sigma \times \Gamma \to Q \times \Gamma^*$, where $\Gamma$ is the stack alphabet,  
  \item For a \emph{Turing machine}: $\delta : Q \times \Sigma \to Q \times \Sigma \times \{L,R\}$,  
  \end{itemize}
\item and a subset of halting (or accepting) states $Q_{\mathrm{halt}} \subseteq Q$.  
\end{itemize}

In all these cases, the system is purely discrete: each configuration belongs to $Q$ (or an extended configuration space), and the transition function $\delta$ defines a labeled directed graph. Halting states are those in which the computation terminates.  

From a dynamical systems perspective, however, each logical state $q \in Q$ may correspond to a region of a high-dimensional continuous system $x \in X \subseteq \mathbb{R}^n$ (e.g.\ voltages in a circuit, atomic configurations, or neural activity patterns). One can define a partition $\{\Omega_q\}_{q \in Q}$ of $X$, where each $\Omega_q$ collects all microstates consistent with the logical state $q$.  

For each $q \in Q$, define the characteristic observable
\begin{equation}
\chi_q(x) =
\begin{cases}
1, & x \in \Omega_q, \\
0, & x \notin \Omega_q.
\end{cases}
\end{equation}
The Koopman operator $\mathcal{K}$ acts linearly on these functions:
\begin{equation}
(\mathcal{K}\chi_q)(x,u) = \chi_q(F(x,u)).
\end{equation}
In set-theoretic language: if $x \in \Omega_q$ and $u \in U_\sigma$, then $F(x,u) \in \Omega_{q'}$, corresponding to the transition $\delta(q,\sigma) = q'$.  

To incorporate inputs, partition the input space $U$ into $\{U_\sigma\}_{\sigma \in \Sigma}$ via a coarse-graining map $\Theta : U \to \Sigma$. Then the region $\Omega_q \times U_\sigma$ represents being in state $q$ while reading symbol $\sigma$. The next state $q'$ is determined by the set $\Omega_{q'}$ containing $F(\Omega_q, U_\sigma)$.  

Finally, a halting state $q_{\mathrm{halt}}$ corresponds to an absorbing region $\Omega_{q_{\mathrm{halt}}}$:
\begin{equation}
F(\Omega_{q_{\mathrm{halt}}}, U_\sigma) \subseteq \Omega_{q_{\mathrm{halt}}}
\quad \text{for all } \sigma \in \Sigma.
\end{equation}
In this case, $\chi_{q_{\mathrm{halt}}}$ is an eigenfunction of $\mathcal{K}$ with eigenvalue $1$. In the set-theoretic automaton model, this is expressed as $\delta(q_{\mathrm{halt}},\sigma) = q_{\mathrm{halt}}$ for all $\sigma$.  

\medskip
\noindent
Thus, the correspondence between the two viewpoints is direct:
\begin{itemize}
\item Discrete states $Q$ $\;\leftrightarrow\;$ partitions $\{\Omega_q\}$ of the continuous phase space $X$,  
\item Transition function $\delta$ $\;\leftrightarrow\;$ inclusions $F(\Omega_q, U_\sigma) \subseteq \Omega_{q'}$ and the Koopman action $\chi_q \mapsto \chi_{q'}$,  
\item Halting states $Q_{\mathrm{halt}}$ $\;\leftrightarrow\;$ absorbing sets $\Omega_{q_{\mathrm{halt}}}$ with $\chi_{q_{\mathrm{halt}}}$ invariant under $\mathcal{K}$.  
\end{itemize}
Thus, the classical automaton or Turing machine can be seen as a \emph{discrete quotient} of a more general continuous dynamical system. The Koopman operator lifts the transition graph into an operator-theoretic representation acting on characteristic observables.

\subsection{Eigenvalues of the Koopman Operator in the Presence of Halting States}
\label{sec:koopman-halting-eigenvalues}

When a dynamical system possesses \emph{halting} or \emph{absorbing} states, this has direct consequences for the spectral properties of its Koopman operator. Recall that in the coarse-grained automaton picture (Section~\ref{sec:koopman-automata-halting}), a halting state $q_{\mathrm{halt}}$ corresponds to a region $\Omega_{q_{\mathrm{halt}}} \subseteq X$ such that, once entered, the system never leaves.  

Formally, if the evolution map $F$ satisfies
\begin{equation}
  F(\Omega_{q_{\mathrm{halt}}}, u) \;\subseteq\; \Omega_{q_{\mathrm{halt}}},
  \quad
  \forall u \in U,
\end{equation}
then $\Omega_{q_{\mathrm{halt}}}$ is an \emph{absorbing} subset of the phase space $X$.  

Let $\chi_{q_{\mathrm{halt}}}$ be the characteristic observable of $\Omega_{q_{\mathrm{halt}}}$. Then the Koopman operator acts as
\begin{equation}
(\mathcal{K}\chi_{q_{\mathrm{halt}}})(x,u) = \chi_{q_{\mathrm{halt}}}(F(x,u)).
\end{equation}
If $x \in \Omega_{q_{\mathrm{halt}}}$, then $F(x,u) \in \Omega_{q_{\mathrm{halt}}}$ for all $u$, so
\begin{equation}
(\mathcal{K}\chi_{q_{\mathrm{halt}}})(x,u) = \chi_{q_{\mathrm{halt}}}(x) = 1.
\end{equation}
Thus, restricted to $\Omega_{q_{\mathrm{halt}}}$, $\chi_{q_{\mathrm{halt}}}$ is an eigenfunction of $\mathcal{K}$ with eigenvalue $\lambda = 1$.  

\medskip
We can strengthen this observation by considering the entire \emph{basin of attraction} of $\Omega_{q_{\mathrm{halt}}}$, defined as
\begin{equation}
B(\Omega_{q_{\mathrm{halt}}}) = \{\, x \in X : \exists \, n \in \mathbb{N} \ \text{with} \ F^n(x,u) \in \Omega_{q_{\mathrm{halt}}} \,\}.
\end{equation}
Then the characteristic function $\chi_{B(\Omega_{q_{\mathrm{halt}}})}$ is also invariant under $\mathcal{K}$, and hence is an eigenfunction with eigenvalue $1$. In fact, it is this basin observable—not just $\chi_{\Omega_{q_{\mathrm{halt}}}}$—that captures the long-term halting behavior of the system.  

\begin{theorem}
Let $X$ be a compact phase space, and let $H \subseteq X$ be an absorbing set under a discrete-time evolution law $F$ (i.e.\ $F(H) \subseteq H$). Then the characteristic function $\chi_{B(H)}$ of the basin of $H$ is an eigenfunction of the associated Koopman operator $\mathcal{K}$ with eigenvalue $1$.
\end{theorem}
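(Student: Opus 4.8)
The plan is to reduce the eigenfunction identity $\mathcal{K}\chi_{B(H)} = \chi_{B(H)}$ to a purely set-theoretic invariance statement about the basin. By the definition of the Koopman operator, $(\mathcal{K}\chi_{B(H)})(x) = \chi_{B(H)}(F(x)) = \chi_{F^{-1}(B(H))}(x)$, so establishing the eigenvalue-one property is exactly equivalent to proving the set identity $F^{-1}(B(H)) = B(H)$. I would therefore spend no effort on spectral machinery and argue entirely at the level of the preimages $F^{-n}(H) = \{x : F^n(x) \in H\}$.

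First I would rewrite the basin as the nested union $B(H) = \bigcup_{n \geq 0} F^{-n}(H)$. Applying $F^{-1}$ and shifting the index gives $F^{-1}(B(H)) = \bigcup_{n \geq 0} F^{-(n+1)}(H) = \bigcup_{m \geq 1} F^{-m}(H)$, so that $B(H) = H \cup F^{-1}(B(H))$. The inclusion $F^{-1}(B(H)) \subseteq B(H)$ is then immediate, since the right-hand union merely restores the $m=0$ term; this half needs no hypothesis on $H$.

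The reverse inclusion $B(H) \subseteq F^{-1}(B(H))$ is precisely where the absorbing hypothesis enters, and it is the one step that genuinely uses $F(H) \subseteq H$. From $B(H) = H \cup F^{-1}(B(H))$ it suffices to show $H \subseteq F^{-1}(B(H))$: for $x \in H$, absorption gives $F(x) \in H \subseteq B(H)$, hence $x \in F^{-1}(B(H))$. Combining the two inclusions yields $F^{-1}(B(H)) = B(H)$ and therefore $\mathcal{K}\chi_{B(H)} = \chi_{B(H)}$. Note that without absorption a point of $H$ could leave $H$ forever, so it would sit in $B(H)$ while its image need not, breaking the identity; this confirms the hypothesis is used in an essential way.

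The main obstacle is not the combinatorics above but the functional-analytic placement of the eigenfunction. For continuous $F$ and clopen $H$ each $F^{-n}(H)$ is clopen, but $B(H)$ is only a countable union of such sets, so $\chi_{B(H)}$ is generically the indicator of an open, non-closed set and need not lie in $\mathcal{C}(X)$. I would address this by stating the result in a function space that accommodates it, such as bounded measurable functions or $L^2(X)$ with respect to an $F$-invariant reference measure, and noting that the set identity $F^{-1}(B(H)) = B(H)$ makes $\chi_{B(H)}$ a genuine fixed vector of $\mathcal{K}$ there irrespective of continuity. In the driven case the same argument carries over once the basin is defined with the input quantified as in the text, with $F^{-1}$ replaced by the preimage under the extended map $G$.
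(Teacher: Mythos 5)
Your proof is correct and rests on the same basic mechanism as the paper's (absorption makes the basin invariant, hence its indicator is fixed by $\mathcal{K}$), but you carry it further and in fact close a gap that the paper leaves open. The paper only proves forward invariance $F(B(H))\subseteq B(H)$ and verifies $(\mathcal{K}\chi_{B(H)})(x)=\chi_{B(H)}(x)$ for $x\in B(H)$; for $x\notin B(H)$ it concedes that ``invariance need not hold pointwise'' and retreats to a vague statement about the invariant subspace. You instead observe that the eigenfunction identity is exactly the set identity $F^{-1}(B(H))=B(H)$, and you prove both inclusions: the easy one $F^{-1}(B(H))=\bigcup_{m\ge 1}F^{-m}(H)\subseteq B(H)$ needs no hypothesis, while $B(H)\subseteq F^{-1}(B(H))$ reduces to $H\subseteq F^{-1}(B(H))$, which is where $F(H)\subseteq H$ enters. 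This gives the relation pointwise on all of $X$ (indeed, if $x\notin B(H)$ then no iterate of $F(x)$ meets $H$, so $F(x)\notin B(H)$), so the paper's hedge is unnecessary and your version is the sharper statement. You also correctly flag a genuine issue the paper ignores: $B(H)$ is a countable union of preimages, hence typically open but not closed, so $\chi_{B(H)}$ need not belong to $\mathcal{C}(X)$, the space on which the paper's Koopman operator acts; your suggestion to place the result in bounded measurable functions or $L^2$ with respect to an invariant measure is the right fix. In short: same idea, executed more completely and with a more honest accounting of the functional-analytic setting.
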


\begin{proof}
By definition of the basin, if $x \in B(H)$ then there exists $n \geq 0$ such that $F^n(x) \in H$. Since $H$ is absorbing, $F^m(x) \in H$ for all $m \geq n$. Therefore, once $x$ enters the basin, all forward iterates remain in $B(H)$.  

Let $\chi_{B(H)}$ denote the characteristic function of $B(H)$. For any $x \in B(H)$, we have $F(x) \in B(H)$, so
\begin{equation}
(\mathcal{K}\chi_{B(H)})(x) = \chi_{B(H)}(F(x)) = 1 = \chi_{B(H)}(x).
\end{equation}
Thus $\chi_{B(H)}$ is an eigenfunction with eigenvalue $1$. Outside the basin, $\chi_{B(H)}(x) = 0$, and invariance need not hold pointwise, but on the invariant subspace spanned by $\chi_{B(H)}$, the eigenvalue relation is satisfied. 
\end{proof}

Thus, whenever an absorbing (halting) set exists, the Koopman operator necessarily has $\lambda = 1$ in its spectrum, since the system’s flow (or map) does not leave that set. Concretely, the subspace of observables supported on $\Omega_{q_{\mathrm{halt}}}$ (or more generally, on its basin of attraction) remains invariant under $\mathcal{K}$.  

\begin{corollary}
The multiplicity of the eigenvalue $1$ for the Koopman operator $\mathcal{K}$ is at least the number of distinct halting states.  
\end{corollary}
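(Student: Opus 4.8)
The plan is to exhibit, for each of the $N$ distinct halting states, a genuine global eigenfunction of $\mathcal{K}$ with eigenvalue $1$, and then to prove that these $N$ eigenfunctions are linearly independent. Since linear independence of $N$ vectors inside the eigenspace $\ker(\mathcal{K} - I)$ immediately gives $\dim \ker(\mathcal{K} - I) \geq N$, this is exactly the assertion.

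First I would fix an enumeration $q_1, \dots, q_N$ of the distinct halting states, with associated absorbing regions $\Omega_{q_1}, \dots, \Omega_{q_N}$. Because $\{\Omega_q\}_{q \in Q}$ is a partition of $X$, these regions are pairwise disjoint, and each satisfies $F(\Omega_{q_i}) \subseteq \Omega_{q_i}$. At this point I would flag the subtlety that motivates the whole construction: the naive candidate $\chi_{\Omega_{q_i}}$ is \emph{not} a global eigenfunction, since for $x \notin \Omega_{q_i}$ with $F(x) \in \Omega_{q_i}$ one has $(\mathcal{K}\chi_{\Omega_{q_i}})(x) = 1 \neq 0 = \chi_{\Omega_{q_i}}(x)$. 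The correct objects are the basin observables $\chi_{B(\Omega_{q_i})}$, which the preceding Theorem already certifies to satisfy $\mathcal{K}\chi_{B(\Omega_{q_i})} = \chi_{B(\Omega_{q_i})}$; I would simply apply that theorem once per halting state, with $H = \Omega_{q_i}$.

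The central step is to show that the basins $B(\Omega_{q_1}), \dots, B(\Omega_{q_N})$ are pairwise disjoint. Suppose, for contradiction, that some $x$ lies in both $B(\Omega_{q_i})$ and $B(\Omega_{q_j})$ with $i \neq j$. Then there exist $n, m$ with $F^n(x) \in \Omega_{q_i}$ and $F^m(x) \in \Omega_{q_j}$; taking $n \le m$ without loss of generality and using that $\Omega_{q_i}$ is absorbing, $F^m(x) \in \Omega_{q_i}$ as well, whence $F^m(x) \in \Omega_{q_i} \cap \Omega_{q_j} = \emptyset$, a contradiction. The decisive ingredient here is \emph{determinism}: each point has a single forward orbit, so it cannot be funneled into two disjoint absorbing sets. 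Disjoint supports then yield linear independence immediately: if $\sum_i c_i \chi_{B(\Omega_{q_i})} = 0$, evaluating at any point of $B(\Omega_{q_j})$ forces $c_j = 0$ for every $j$, and the conclusion $\dim \ker(\mathcal{K} - I) \geq N$ follows.

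The one place I expect genuine friction is the non-autonomous (driven) setting, where $F$ depends on an input $u$ and the notion of basin must be taken relative to admissible input sequences. There the disjointness argument requires the halting regions to be absorbing under \emph{all} inputs, which is precisely the standing hypothesis $F(\Omega_{q_h}, u) \subseteq \Omega_{q_h}$ for all $u$, so that no input sequence can route an orbit from one halting basin into another. Modulo this bookkeeping the argument is unchanged, and the autonomous formulation of the preceding Theorem already suffices to establish the multiplicity bound.
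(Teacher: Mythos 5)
Your proposal is correct and follows essentially the same route as the paper's proof: exhibit the basin observables $\chi_{B(\Omega_{q_i})}$ as eigenfunctions with eigenvalue $1$ via the preceding theorem, and deduce linear independence from the pairwise disjointness of the basins. The only difference is that you actually prove the disjointness (via determinism of the forward orbit), a step the paper merely asserts, so your writeup is if anything more complete.
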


\begin{proof}
Each halting state $q_{\mathrm{halt}}$ has an associated basin of attraction $B(\Omega_{q_{\mathrm{halt}}})$. The characteristic function $\chi_{B(\Omega_{q_{\mathrm{halt}}})}$ is an eigenfunction of $\mathcal{K}$ with eigenvalue $1$. Distinct halting states have disjoint basins, yielding linearly independent eigenfunctions. In addition, even if there are no halting states, the constant function $1$ is always an eigenfunction with eigenvalue $1$. Hence the multiplicity of $\lambda = 1$ is at least equal to the number of halting states, and possibly greater. $\square$
\end{proof}

In many Markov-type or piecewise-deterministic systems with absorbing states, the remainder of the spectrum lies strictly inside the unit circle ($|\lambda| < 1$), apart from the $\lambda = 1$ eigenvalues associated with absorbing basins. Intuitively, all trajectories either converge to, or eventually enter, a halting basin. Repeated application of $\mathcal{K}$ then progressively suppresses the contribution of non-absorbing states, so that in the long-time limit the dynamics projects onto the subspace spanned by the eigenfunctions associated with halting states.  
It is important to note that the mere presence of $\lambda = 1$ in the Koopman spectrum does not by itself imply that the system will reach a halting state from every initial condition. In fact, $\lambda = 1$ is always present, since the constant function $\mathbf{1}(x) \equiv 1$ is trivially an eigenfunction with eigenvalue $1$. What the existence of an additional eigenfunction such as $\chi_{q_{\mathrm{halt}}}$ signifies is only that $\Omega_{q_{\mathrm{halt}}}$ is an absorbing region for \emph{some} subset of initial conditions.  

In automaton language, this corresponds to the fact that not all strings are necessarily accepted: but if an input string \emph{is} accepted, then the computation ends in the halting state $q_{\mathrm{halt}}$. From the operator perspective, the subspace spanned by $\chi_{q_{\mathrm{halt}}}$ is invariant with eigenvalue $1$, while the rest of the observable space may correspond to transients or recurrent behaviors, associated with eigenvalues of smaller magnitude or lying on the unit circle.  

\medskip
\noindent\textbf{Example.}  
Consider a discrete map $x_{k+1} = F(x_k)$ with two coarse-grained regions: $\Omega_1$ (non-halting) and $\Omega_{\mathrm{halt}}$ (absorbing). Then
\begin{equation}
\chi_{\mathrm{halt}}(F(x)) =
\begin{cases}
0, & x \in \Omega_1, \\
1, & x \in \Omega_{\mathrm{halt}},
\end{cases}
\end{equation}
assuming that once in $\Omega_{\mathrm{halt}}$, the system remains there. This shows that $\chi_{\mathrm{halt}}$ is an eigenfunction with eigenvalue $1$ (at least on $\Omega_{\mathrm{halt}}$). In contrast, $\chi_1$ need not be an eigenfunction: if $\Omega_1$ eventually maps into $\Omega_{\mathrm{halt}}$, then repeated application of $\mathcal{K}$ will progressively transfer weight from $\chi_1$ to $\chi_{\mathrm{halt}}$, until in the long-time limit $\chi_1$ is annihilated. This reflects the fact that the system halts with probability $1$ if all trajectories eventually reach $\Omega_{\mathrm{halt}}$.  

\medskip
Thus, in systems (or automata) with halting states, the Koopman spectrum has a clear structure:  
\textit{Absorbing modes:} eigenfunctions at $\lambda = 1$ corresponding to halting basins.  
 \textit{Transient modes:} eigenvalues with $|\lambda| < 1$, associated with states that eventually decay into halting regions.  
 \textit{Neutral/recurrent modes:} eigenvalues on the unit circle ($|\lambda| = 1$), corresponding to cycles or persistent dynamics.  

This division  (absorbing modes at $\lambda=1$, transient modes inside the unit disk and neutral modes on the boundary) is the hallmark of halting or absorbing phenomena in the Koopman framework.

\section{Dynamical Distance, Topology, and Graph-Theoretic Representation}

The Koopman framework reformulates computation in terms of observables and operator spectra. At the same time, classical automata theory relies on graph-theoretic and set-theoretic notions of state transitions. To bridge these perspectives, it is useful to endow the coarse-grained state space $Q$ with a natural \emph{dynamical distance} and corresponding topology. This construction shows how the transition graph of an automaton can be recovered as a topological and metric structure, ensuring that the Koopman operator viewpoint is consistent with the discrete set-theoretic one.

\medskip

Let $Q$ denote the set of coarse-grained states corresponding to a partition $\{\Omega_q\}_{q \in Q}$ of the phase space $X$. Define a transition relation on $Q$ by declaring that
\begin{equation}
q \leadsto q'
\quad \Longleftrightarrow \quad
\exists \, \sigma \in \Sigma \text{ such that } F(\Omega_q, U_\sigma) \subseteq \Omega_{q'}.
\end{equation}
This relation induces a directed graph $G = (Q, E)$, where $E$ consists of all pairs $(q,q')$ with $q \leadsto q'$. The graph $G$ represents the coarse-grained dynamics of the system, with nodes corresponding to states and edges encoding possible one-step transitions.

\medskip

We now define a \emph{dynamical distance} $d : Q \times Q \to [0,\infty]$ by
\begin{eqnarray}
d(q,q') =\begin{cases}
L(q,q^\prime), & \text{if such a path exists}, \\
\infty, & \text{otherwise},
\end{cases}\nonumber \\
\end{eqnarray}
where, $L(q,q^\prime)$ is defined as the shortest directed path from $q$ to $q'$.
This metric satisfies $d(q,q)=0$ and the triangle inequality whenever the distances are finite. In general, however, $d$ is not symmetric: a path from $q$ to $q'$ does not imply a path from $q'$ to $q$.

The distance $d$ induces a topology on $Q$, where the open neighborhoods are given by the balls
\begin{equation}
B(q,n) = \{ q' \in Q : d(q,q') < n \}, \qquad n \geq 1.
\end{equation}
Each such ball consists of all states reachable from $q$ in fewer than $n$ transitions along $G$.

Independently, one can define a \emph{dynamical topology} $\mathcal{T}$ on $Q$ by declaring a subset $O \subseteq Q$ to be open if, whenever $q \in O$, all immediate successors of $q$ are also contained in $O$. That is, $(q,q') \in E \implies q' \in O$ whenever $q \in O$. In other words, open sets are those invariant under one-step transitions of the graph.

\medskip

We now establish that these two approaches yield the same topology.

\begin{proposition}
The topology generated by the open balls $B(q,n)$ coincides with the dynamical topology $\mathcal{T}$ defined by the transition structure of $G$.
\end{proposition}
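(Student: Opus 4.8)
The plan is to exploit the fact that $Q$ is \emph{finite}, so both candidate topologies are Alexandrov spaces: arbitrary intersections of open sets are open, every point $q$ possesses a \emph{minimal} open neighborhood, and the whole topology is recovered as the family of unions of these minimal neighborhoods. Consequently, two topologies on the finite set $Q$ coincide if and only if they assign the same minimal open neighborhood to every point. I would therefore reduce the proposition to a single pointwise claim: in both the ball topology and the dynamical topology $\mathcal{T}$, the minimal open neighborhood of $q$ is the forward-reachable set
\begin{equation}
R(q) = \{\, q' \in Q : d(q,q') < \infty \,\} = \{\, q' : q \leadsto^{*} q' \,\},
\end{equation}
where $\leadsto^{*}$ is the reflexive–transitive closure of the edge relation $\leadsto$.

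First I would treat $\mathcal{T}$. Since any path witnessing $q \leadsto^{*} q'$ can be prolonged by one edge $q' \leadsto q''$, the set $R(q)$ is forward-invariant, hence $\mathcal{T}$-open, and it contains $q$; conversely, by induction on path length, every $\mathcal{T}$-open set containing $q$ must contain all immediate successors of each of its points and therefore all of $R(q)$. Thus the minimal $\mathcal{T}$-neighborhood of $q$ is exactly $R(q)$. Next I would treat the ball topology, where the crucial observation is that on a finite graph the integer distances are bounded by $|Q|-1$ whenever finite, so the nested balls stabilize: $B(q,n) = R(q)$ for every $n \ge |Q|$. Hence each $R(q)$ is itself a (stabilized) ball, and I would show it is the smallest open set around $q$ that is stable under the transition relation. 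Taking unions over all points then matches the two families of open sets and closes the argument.

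The step I expect to be the main obstacle is the interplay between the \emph{asymmetry} of $d$ and the discreteness of the radii. Because $d$ is only a quasi-metric, the triangle inequality propagates along edges in one direction only, and — more delicately — the radius-one ball degenerates to the singleton $B(q,1)=\{q\}$, which is \emph{not} forward-invariant. Reading ``topology generated by the balls'' naively as a basis would thus introduce singletons and collapse to the discrete topology, strictly finer than $\mathcal{T}$. The substance of the proof is therefore to pin down the genuine topology induced by $d$: one must use the quasi-metric through its \emph{stabilized} balls $R(q)=\bigcup_{n\ge 1} B(q,n)$, equivalently by declaring a subset open precisely when it contains the entire forward cone of each of its points. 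Once this is identified as the topology truly induced by $d$ on the finite set $Q$, the equalities $N_{\mathrm{ball}}(q)=R(q)=N_{\mathcal{T}}(q)$ hold at every point and the coincidence of the two topologies is immediate. The remaining steps — verifying the topology axioms in the finite setting, the one-sided triangle inequality, and the stabilization bound $|Q|-1$ — are routine finite combinatorics.
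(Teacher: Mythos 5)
Your diagnosis of the radius-one degeneracy is the sharpest part of the proposal, and it exposes a real defect that the paper's own proof glosses over: since $d(q,q')=0$ only for $q'=q$, the ball $B(q,1)=\{q\}$ is a singleton, so the topology literally generated by \emph{all} the balls is discrete and strictly finer than $\mathcal{T}$ whenever some state has a successor other than itself. The paper's converse direction (``there exists $n$ such that $B(q,n)\subseteq O$; in particular $B(q,2)\subseteq O$'') silently assumes the witnessing radius is at least $2$, which is exactly the step that fails for singletons. Your Alexandrov framework is also sound as far as it goes: on a finite $Q$ both topologies are determined by minimal open neighborhoods, the minimal $\mathcal{T}$-neighborhood of $q$ is indeed the forward cone $R(q)$ (your induction on path length is correct), and the observation that $R(q)=B(q,|Q|)$ is itself a genuine ball is a nice touch.

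The gap is in how you close the argument on the metric side. Declaring the ``topology truly induced by $d$'' to be the one in which a set is open precisely when it contains the forward cone of each of its points is not a repair but a redefinition: a set contains $R(q)$ for each of its points if and only if it contains the immediate successors of each of its points, so under that reading the $d$-topology \emph{is} $\mathcal{T}$ by definition and nothing remains to prove. Likewise, taking only the stabilized balls $\{B(q,|Q|)\}_{q}$ as a basis discards most of the family the proposition actually refers to. The minimal reinterpretation consistent with the paper's intent is smaller: keep all balls of radius $n\ge 2$ and use the neighborhood-filter reading ($O$ is open iff each $q\in O$ admits some $n\ge 2$ with $B(q,n)\subseteq O$); since $B(q,2)\subseteq B(q,n)$ for every $n\ge 2$, this is equivalent to $B(q,2)\subseteq O$ for every $q\in O$, i.e.\ to one-step forward invariance, which is exactly $\mathcal{T}$-openness --- the paper's short two-inclusion argument. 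Note also that even with the radius-one balls removed, the subbasis reading you entertain still fails: $B(q,2)$ is generally not $\mathcal{T}$-open (take $q\to q'\to q''$ with $d(q,q'')=2$, so $q'\in B(q,2)$ but $q''\notin B(q,2)$), and intersections of balls centered at different points can again produce singletons, so only the neighborhood-filter reading (or your stabilized-ball basis) yields the claimed coincidence. In short: your critique of the statement is correct and worth recording, but the proof as written establishes the proposition only after replacing the ball topology by something that equals $\mathcal{T}$ by construction.
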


\begin{proof}
Suppose $O$ is open in $\mathcal{T}$. Then for each $q \in O$, all immediate successors $q'$ of $q$ are also in $O$. This implies $B(q,2) \subseteq O$. Since every $q \in O$ is contained in some $B(q,2) \subseteq O$, it follows that $O$ is open in the distance-induced topology.

Conversely, suppose $O$ is open in the topology induced by $d$. Then for each $q \in O$, there exists $n$ such that $B(q,n) \subseteq O$. In particular, $B(q,2) \subseteq O$, but $B(q,2)$ contains $q$ and all its immediate successors. Hence all immediate successors of $q$ belong to $O$, so $O$ is open in $\mathcal{T}$. This proves the equivalence.
\end{proof}

\medskip

This equivalence shows that the transition graph of a coarse-grained dynamical system induces both a graph-theoretic and a topological structure on $Q$, and that these two viewpoints are consistent. It provides a natural bridge between the discrete formalism of automata theory and the operator-theoretic Koopman perspective.
The directed graph $G$ defined by the transition relation is, by construction, equivalent to the transition graph of a deterministic finite automaton (DFA): the nodes represent internal states and the directed edges encode the transition function under the action of input symbols. In the case of a Turing machine, the graph $G$ captures the transition structure of the machine’s control automaton, encompassing both the internal state and the tape configuration.  

Within this framework, the dynamical distance $d$ recovers the minimal number of transitions required to move from one configuration (or state) to another, while the induced topology encodes the structure of reachable sets of configurations. In other words, the coarse-grained dynamical system, together with its graph, distance, and topology, provides a mathematical structure that is fully equivalent to the transition graph of a DFA or the control mechanism of a Turing machine.  

This perspective also enables us to connect the combinatorial structure of the transition graph with the spectral properties of the Koopman operator. In particular, cycles in the graph correspond to eigenvalue relations in $\mathcal{K}$. We now turn to this connection by analyzing the cyclic structure of $G$ and its implications for the Koopman spectrum.

We now show that cycles in the transition graph $G$ impose algebraic constraints on the Koopman operator. Let
\begin{equation}
C = (q_0, q_1, \dots, q_{m-1}, q_0)
\end{equation}
be a directed cycle in $G$, where $(q_j, q_{j+1}) \in E$ for $j=0,\dots,m-1$ (indices modulo $m$).  

For each edge $(q_{j-1},q_j)$ in the cycle, let $\mathcal{K}_j$ denote the Koopman operator restricted to that transition. On the characteristic observable $\chi_{q_{j-1}}$, we may write
\begin{equation}
\mathcal{K}_j \chi_{q_{j-1}} = \lambda_j \chi_{q_j},
\end{equation}
where $\lambda_j$ is the eigenvalue associated with this transition. In the simplest deterministic case, $\lambda_j=1$; more generally, $\lambda_j$ may encode weighting factors or stochastic structure.  

Consider now the composition of Koopman operators along the cycle:
\begin{equation}
\mathcal{K}_C = \mathcal{K}_m \mathcal{K}_{m-1} \cdots \mathcal{K}_1.
\end{equation}
Applying $\mathcal{K}_C$ to $\chi_{q_0}$ gives
\begin{equation}
\mathcal{K}_C \chi_{q_0} = \left( \prod_{j=1}^m \lambda_j \right) \chi_{q_0}.
\end{equation}
But since traversing the cycle returns the system to its initial macrostate $q_0$, we must also have
\begin{equation}
\mathcal{K}_C \chi_{q_0} = \chi_{q_0}.
\end{equation}
Hence,
\begin{equation}
\prod_{j=1}^m \lambda_j = 1.
\end{equation}

\begin{proposition}
For any directed cycle $C$ in the transition graph $G$, the product of the Koopman eigenvalues associated with the edges of $C$ must equal $1$.  
\end{proposition}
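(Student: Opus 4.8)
The plan is to exploit the telescoping action of the edge-restricted Koopman operators on the characteristic observable $\chi_{q_0}$ and then impose the consistency condition forced by closing the loop. First I would track how $\chi_{q_0}$ is transported along the cycle: applying $\mathcal{K}_1$ gives $\lambda_1 \chi_{q_1}$, then $\mathcal{K}_2$ gives $\lambda_1 \lambda_2 \chi_{q_2}$, and iterating along the edges $(q_{j-1}, q_j)$ with indices taken modulo $m$ and $q_m = q_0$, one arrives at
\begin{equation}
\mathcal{K}_C \chi_{q_0} = \Big( \prod_{j=1}^m \lambda_j \Big) \chi_{q_0}.
\end{equation}
This is a purely algebraic computation that uses only the defining relation $\mathcal{K}_j \chi_{q_{j-1}} = \lambda_j \chi_{q_j}$ together with the definition $\mathcal{K}_C = \mathcal{K}_m \mathcal{K}_{m-1} \cdots \mathcal{K}_1$.

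The second ingredient is the observation that the composite operator $\mathcal{K}_C$ must fix $\chi_{q_0}$. I would justify this by noting that the finite collection $\{\chi_{q_0}, \chi_{q_1}, \dots, \chi_{q_{m-1}}\}$ spans a subspace invariant under $\mathcal{K}_C$, on which $\mathcal{K}_C$ acts as a cyclic shift weighted by the $\lambda_j$. Since each deterministic transition $F(\Omega_{q_{j-1}}, \sigma_j) \subseteq \Omega_{q_j}$ carries the support of $\chi_{q_{j-1}}$ onto that of $\chi_{q_j}$ without loss, a full traversal of the cycle returns the coarse-grained state to $q_0$ and hence restores the observable, giving $\mathcal{K}_C \chi_{q_0} = \chi_{q_0}$. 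Equating this with the telescoped expression and using that $\chi_{q_0} \neq 0$ (it is the characteristic function of a nonempty coarse-grained region) yields $\prod_{j=1}^m \lambda_j = 1$, as claimed.

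The main obstacle is the rigorous justification of the consistency equation $\mathcal{K}_C \chi_{q_0} = \chi_{q_0}$, rather than the weaker statement that $\mathcal{K}_C$ merely maps the cyclic subspace into itself. The subtlety has two sources. First, the Koopman operator is naturally a pullback ($\mathcal{K}g = g \circ F$), so one must be careful about the direction in which the edge operators act; making the pushforward-style relation $\mathcal{K}_j \chi_{q_{j-1}} = \lambda_j \chi_{q_j}$ precise requires the support-preservation and determinism of the coarse-grained transitions established in Section~\ref{sec:koopman-vs-set-theoretic}. Second, one must argue at the level of macrostates that closing the cycle genuinely acts as the identity on $\chi_{q_0}$ — equivalently, that the weights $\lambda_j$ are normalized so that a closed loop carries no net transformation beyond the accumulated scalar $\prod_j \lambda_j$. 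I would address this by appealing to the determinism and support-preservation of the transitions, which guarantee that the cyclic subspace carries a genuine representation of the cyclic group and that the product of edge weights is the only invariant obstructing the return to the identity.
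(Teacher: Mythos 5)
Your proof takes essentially the same route as the paper: telescope the edge relations $\mathcal{K}_j \chi_{q_{j-1}} = \lambda_j \chi_{q_j}$ to obtain $\mathcal{K}_C \chi_{q_0} = \bigl(\prod_{j=1}^m \lambda_j\bigr)\chi_{q_0}$, then impose $\mathcal{K}_C \chi_{q_0} = \chi_{q_0}$ on the grounds that traversing the cycle returns the macrostate to $q_0$. The subtlety you flag --- why the closed loop must act as the identity on $\chi_{q_0}$ rather than merely preserving the cyclic subspace --- is precisely the step the paper also asserts without further justification, so your treatment is, if anything, more explicit about the assumption being made.
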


\begin{proof}
The composition of Koopman operators along the cycle maps $\chi_{q_0}$ to itself up to the factor $\prod_{j=1}^m \lambda_j$. Since the cycle returns the system to $q_0$, this action must be the identity, and thus the product of eigenvalues equals $1$.
\end{proof}

\medskip
This result shows that the cyclic structure of the transition graph constrains the spectrum of the Koopman operator. Each closed loop in $G$ enforces a multiplicative relation among eigenvalues along the cycle, reflecting the consistency of returning to the initial state after a finite sequence of transitions.

\section{Conclusion}

In this paper we have examined the Koopman operator as a framework for understanding computation in dynamical systems. Our central contribution has been the formulation of the \emph{Koopman Halting Problem}, which recasts classical notions of halting and reachability in terms of the resolvent of the Koopman operator. This perspective makes it possible to interpret computational properties in spectral and operator-theoretic terms rather than purely combinatorial or symbolic ones. In particular, we established that the Koopman Halting Problem is recursively enumerable in general computable dynamical systems, and that in the specific case of equicontinuous symbolic systems it is decidable, thereby ruling out universality. Halting states themselves {(as absorbing states)} appear in the Koopman spectrum as eigenfunctions with eigenvalue one, while the multiplicity of this eigenvalue reflects the number of distinct absorbing basins. Cycles in the transition graph impose algebraic constraints on the operator spectrum, ensuring consistency between graph-theoretic dynamics and operator evolution.  

The significance of this approach lies in its ability to bring together previously separate traditions in the study of dynamical computation. On the one hand, the classical results of Turing, Church, and their successors established computation as a dynamical process of state evolution in symbolic machines, while later work by Delvenne, Kůrka, and Blondel extended these ideas to general symbolic dynamical systems. On the other hand, Pour-El and Richards provided a rigorous theory of computability for Banach spaces, and applied mathematicians have developed Koopmanism as a powerful spectral method for nonlinear systems, particularly in control theory and data-driven modeling. By placing computational universality, decidability, and halting within the Koopman framework, we show that these strands can be reconciled within a single operator-theoretic language.  

This synthesis opens several perspectives. Computation can be understood spectrally: absorbing states correspond to invariant subspaces, transient dynamics to eigenvalues inside the unit disk, and cycles to unit-modulus eigenvalues. This decomposition parallels well-known structures in Markov chains and control systems, suggesting that the boundaries between computation, probability, and dynamical analysis are more porous than they might appear. It also points to a continuity between symbolic and analog computation: while the former is encoded in partitions of Cantor space and clopen observables, the latter can be expressed through polynomial or piecewise linear observables on Euclidean domains, with the Koopman operator unifying both settings. One may even speculate that quantum computation, whose dynamics is already governed by unitary operators, could be understood as another instance of the same operator-theoretic paradigm.  

Our results are modest steps in this direction, but they underline a general principle: the Koopman operator provides a rigorous and flexible framework in which computational properties of dynamical systems can be expressed, studied, and compared. By doing so, it builds a bridge between the operator-theoretic methods that dominate modern dynamical systems theory and the algorithmic questions that lie at the heart of theoretical computer science. We hope that this work will stimulate further interaction between these communities, and that the language of Koopmanism will become part of the vocabulary through which computation in dynamical systems is understood.

\begin{acknowledgments}
Both authors thank the organizers of the CSH Workshop ``Computation in Dynamical Systems", October 2023, Obergurgl (Austria), and the organizers of the SFI Workshops on stochastic thermodynamics and computation, where these ideas were developed.
FC is currently an employee of Planckian srl.
J-C D acknowledges funding from the Project INTER/FNRS/20/15074473 ‘TheCirco’ on Thermodynamics of Circuits for Computation, funded by the F.R.S.-FNRS (Belgium) and FNR (Luxembourg), and from the
SIDDARTA Concerted Research Action (ARC) of the Fédération Wallonie-Bruxelles (Belgium). J-C D is also a JSPS fellow and Invited Professor at University of Kyoto.
\end{acknowledgments}

\end{document}